\documentclass[journal=tosc]{iacrj}

\usepackage{amsmath,amsfonts,amssymb}
\usepackage{algpseudocodex}
\usepackage{algorithm}
\usepackage{braket}
\usepackage{multirow}
\usepackage{nicematrix}
\usepackage{threeparttable}

\newcommand{\etal}{\textit{et~al.}}

\title{Utilizing Circulant Structure to Optimize the Implementations of Linear Layers}

\addauthor[
orcid   = {0009-0008-0031-9106},
inst    = {1,2},
email   = {xubuji20@mails.ucas.ac.cn},
surname = {Xu},
]{Buji Xu}
\addauthor[
orcid   = {0000-0002-0281-1670},
inst    = {1,2},
email   = {sunxiaoming@ict.ac.cn},
surname = {Sun},
]{Xiaoming Sun}

\addaffiliation[
ror        = {0090r4d87},
department = {State Key Lab of Processors},
city       = {Beijing},
postcode   = {100190},
country    = {China},
]{Institute of Computing Technology, Chinese Academy of Sciences}
\addaffiliation[
ror        = {05qbk4x57},
department = {School of Computer Science and Technology},
city       = {Beijing},
postcode   = {100049},
country    = {China},
]{University of Chinese Academy of Science}

\begin{document}

    \maketitle

    \addkeywords{Linear Layer, Depth, XOR Counts, Quantum Circuit, AES}

    \begin{abstract}
        In this paper, we propose a novel approach for optimizing the linear layer used in symmetric cryptography.
        It is observed that these matrices often have circulant structure.
        The basic idea of this work is to utilize the property to construct a sequence of transformation matrices, which allows subsequent heuristic algorithms to find more efficient implementations.
        Our results outperform previous works for various linear layers of block ciphers.
        For Whirlwind $M_{0}$, we obtain two implementations with 159 XOR counts (8\% better than Yuan~\etal~at FSE 2025) and depth 17 (39\% better than Shi~\etal~at AsiaCrypt 2024) respectively.
        For AES MixColumn, our automated method produces a quantum circuit with depth 10, which nearly matches the manually optimized state-of-the-art result by Zhang~\etal~at IEEE TC 2024, only with 2 extra CNOTs.
    \end{abstract}

    \begin{textabstract}
        In this paper, we propose a novel approach for optimizing the linear layer used in symmetric cryptography.
        It is observed that these matrices often have circulant structure.
        The basic idea of this work is to utilize the property to construct a sequence of transformation matrices, which allows subsequent heuristic algorithms to find more efficient implementations.
        Our results outperform previous works for various linear layers of block ciphers.
        For Whirlwind $M_{0}$, we obtain two implementations with 159 XOR counts (8\% better than Yuan et al. at FSE 2025) and depth 17 (39\% better than Shi et al. at AsiaCrypt 2024) respectively.
        For AES MixColumn, our automated method produces a quantum circuit with depth 10, which nearly matches the manually optimized state-of-the-art result by Zhang et al. at IEEE TC 2024, only with 2 extra CNOTs.
    \end{textabstract}

    \section{Introduction}

    In recent years, lightweight cryptography has gained significant attention as the rapid development of resource-constrained applications, such as Radio-Frequency Identification (RFID) tags and Internet of Things (IoTs).
    Typically constraints include limited circuit size, power consumption and latency.

    Research generally follows two main paths.
    The first is the design of new ciphers using efficient components.
    Maximum Distance Separable (MDS) matrices are widely used in linear layers due to their ability to achieve the maximum branch number.
    Considerable research has been conducted on constructing lightweight MDS matrices, including \cite{FSE:SKOP15,C:BeiKraLea16,FSE:LiuSim16,FSE:LiWan16,ToSC:SarSye16,ToSC:GLGWL17,ToSC:ZhoWanSun18,ToSC:LSLWH19,DCC:YanZenWan21,zhaoConstruction4x4Lightweight2024}.
    Structures like circulant, Hadamard, Toeplitz, or involution matrices have been used to design matrices to reduce the search space and the number of XOR operations.

    The second is the optimization of existing cipher.
    Two important criteria are gate count and circuit depth (or latency).
    The BP algorithm~\cite{JC:BoyMatPer13} employs a greedy strategy on reducing distances to outputs, and can produce satisfiable results for many small-scale matrices.
    Several improvements upon the BP algorithm have been proposed in subsequent research~\cite{viscontiImprovedUpperBounds2018,banikMoreResultsShortest2019,TCHES:MaxEkd19,TCHES:TanPey19,ToSC:ShiFenXu23}.
    Xiang~\etal~\cite{ToSC:XZLBZ20} introduced a new approach to optimize in-place circuit by means of reduction rules.
    Lin~\etal~\cite{linFrameworkOptimizeImplementations2021} later adopted the idea for general circuit.

    The synthesis of linear layers is also of interest in quantum computing~\cite{nielsenQuantumComputationQuantum2000}.
    This is also known as CNOT circuit synthesis.
    CNOT gate is fundamental to quantum computing, as they facilitate entanglement, a crucial resource for quantum algorithms.
    CNOT gates play an important role in different areas of quantum computing, including quantum error correction~\cite{gottesmanStabilizerCodesQuantum1997,aaronsonImprovedSimulationStabilizer2004}, arithmetic circuits~\cite{brugiereGaussianEliminationGreedy2021,brugiereReducingDepthLinear2021} and randomized circuit benchmarking~\cite{knillRandomizedBenchmarkingQuantum2008,magesanScalableRobustRandomized2011}.
    The synthesis of CNOT citcuits has been studied by many researchers, such as~\cite{mooreParallelQuantumComputation2001,patelOptimalSynthesisLinear2008,jiangOptimalSpacedepthTradeoff2020,wuOptimizationCNOTCircuits2023}.

    The optimization of linear layers can also contribute to the quantum attack against cryptography. Grover's algorithm~\cite{groverFastQuantumMechanical1996} provides a quadratic speedup for unstructured search, which halves the effective key length of symmetric-key ciphers against exhaustive search.
    Additionally, various techniques utilizing the power of quantum computation are emerging~\cite{C:KLLN16,AC:BHNSS19,EC:HosSas20,EC:NaySch20,AC:BLNS21,EC:BonSchSib22}.
    In NIST's call for proposals for the standardization of post-quantum cryptography, the complexity of a quantum circuit for AES was chosen as a baseline to categorize security levels against quantum attacks, where the gate count within \texttt{MAXDEPTH} was proposed to be the metric.
    Kim~\etal~\cite{kimTimeSpaceComplexity2018} noted that parallelization strategies for Grover's algorithm could help address the depth constraint.
    They introduced the DW (depth times width) cost to better capture time-space trade-off, which has been widely used since its adoption in EuroCrypt 2020~\cite{EC:JNRV20}.
    To reduce the overall DW cost, it is essential to reduce the depth of linear layers, especially when low-depth implementations for non-linear component are used, like~\cite{AC:HuaSun22,AC:LPZW23,EC:HuaZhaLin25}.

    Several studies have focused on optimizing CNOT circuits.
    Xiang~\etal~\cite{ToSC:XZLBZ20} combined greedy algorithm and Gaussian elimination to reduce gate counts.
    They also utilized several reduction rules to further optimize the result.
    Brugière~\etal~\cite{brugiereReducingDepthLinear2021} developed a cost-function-guided approach to minimize circuit depth.
    Zhu and Huang~\cite{zhuOptimizingDepthQuantum2023} proposed a method to optimize the depth of existing CNOT circuits.
    Shi and Feng~\cite{AC:ShiFen24} improved upon Brugière~\etal's algorithm, achieving state-of-the-art depth for many matrices used in block ciphers, including AES MixColumn.
    Yuan~\etal~\cite{ToSC:YWSZZ24} searched for complete reduction rules within given length $K$ to further improve the result of Xiang~\etal.

    \subsection{Our Contributions}

    In this paper, we focus on linear layers that can be represented by a matrix with circulant structure.
    Circulant structure is widely adopted in the design of MDS matrices, such as AES MixColumn, and is especially favored in lightweight cryptography~\cite{FSE:LiuSim16}.
    Shi and Feng~\cite{AC:ShiFen24} managed to optimize the depth of many invertible linear transformations used in block ciphers, more than half of which are actually circulant matrices.
    However, existing methods have yet to fully exploit this property.

    We present a novel algorithm for synthesizing quantum circuits for such matrices.
    It can reduce the depth and optimize the gate count for various matrices used in cryptography.
    For example, for the linear transformation $M_0$ used in Whirlwind~\cite{DCC:BNNRT10}, compared with \cite{AC:ShiFen24}, the depth is reduced from 28 to 17 and the gate count is reduced from 286 to 200.
    For AES MixColumn, our algorithm achieves the state-of-the-art depth (10) and reduced gate count (107), which outperforms \cite{AC:ShiFen24} (depth 10, gate count 131) and closely matches the manually optimized result (depth 10, gate count 105) by Zhang~\etal~\cite{zhangOptimizedQuantumCircuit2024}. Besides, for linear transformation like Whirlwind $M_{0}$, our algorithm can also be used to optimize the XOR count of classical circuits.

    Moreover, we give an algorithm (in Appendix) to automate the process of Zhang~\etal's manual optimization for the quantum circuit of MixColumn.

    The source code and implementations are available at \url{https://github.com/DerryAlex/circulant_synth}.

    \subsection{Organization}

    The paper is structured as follows.
    Section 2 presents relevant knowledge and backgrounds.
    Section 3 describes some existing heuristic methods for matrix decomposition.
    Our new algorithm for circulant matrix is introduce in Section 4.
    At last, Section 5 discusses and concludes the paper.

    \section{Preliminaries}

    In this section, we offer some relevant knowledge and introduce the notation used in this work.

    \paragraph{CNOT Circuit}

    A qubit $\ket{\psi} = a \ket{0} + b \ket{1}$ can be described as a unit vector in $\mathbb{C}^{2}$, where $|a|^{2} + |b|^{2} = 1$.
    Common quantum gates include Hardamard gate ($H$), phase gate ($S$), Pauli-X gate ($X$) and controlled-NOT(CNOT) gate.
    \begin{gather*}
        H = \frac{1}{\sqrt{2}} \begin{pmatrix}
            1 & 1 \\
            1 & -1
        \end{pmatrix},
        S = \begin{pmatrix}
            1 & 0 \\
            0 & i
        \end{pmatrix},
        X = \begin{pmatrix}
            0 & 1 \\
            1 & 0
        \end{pmatrix},
        T = \begin{pmatrix}
            1 & 0 \\
            0 & e^{i \pi / 4}
        \end{pmatrix}, \\
        \text{CNOT} = I \otimes X = \begin{pmatrix}
            1 & 0 & 0 & 0 \\
            0 & 1 & 0 & 0 \\
            0 & 0 & 0 & 1 \\
            0 & 0 & 1 & 0
        \end{pmatrix}  .
    \end{gather*}
    CNOT gate can be viewed as an \emph{in-place} XOR gate as $\text{CNOT} \ket{x} \ket{y} = \ket{x} \ket{x \oplus y}$, which performs invertible linear map $\begin{pmatrix}
        1 & 0 \\
        1 & 1
    \end{pmatrix}$ over $\mathbb{F}_{2}$.
    Quantum circuit consists solely of CNOT gates is called as CNOT circuit.

    \paragraph{Elementary Matrix}

    The following three types of row operations are called as elementary row operations, as they are fundamental in matrix algebra.
    \begin{enumerate}
        \item Swap the position of two rows.
        \item Multiply one row by a nonzero scalar.
        \item Add a scalar multiple of one row to another row.
    \end{enumerate}

    The linear layer of a block cipher is actually an $n$-bit reversible linear boolean function, which is equivalent to an invertible matrix $A$ in $GL(\mathbb{F}_{2}, n)$, where $GL(\mathbb{F}, n)$ is the general linear group of invertible $n \times n$ matrices over field $\mathbb{F}$.
    The Gauss-Jordan elimination can be used to transform any invertible matrix into an identity matrix by performing a series of elementary operations.
    Formally, we have the following theorem.
    Readers can find proof in textbooks, such as~\cite{artinAlgebra2018}.

    \begin{theorem}
        For any field $\mathbb{F}$, every invertible matrix $A$ in $GL(\mathbb{F}, n)$ can be transformed into identity matrix using elementary row operations.
    \end{theorem}

    A matrix obtained by performing an elementary operation on identity matrix $I_{n}$ is called an elementary matrix.
    By exchanging $i$-th and $j$-th row of an identity matrix, we obtain a type-1 elementary matrix, denoted as $E(i \leftrightarrow j)$.
    In $\mathbb{F}_{2}$, the nonzero scalar must be 1.
    Hence, the second type operation becomes no-op.
    By adding $j$-th row to $i$-th row of an identity matrix, we obtain a type-3 elementary matrix, denoted as $E(i + j)$.
    The CNOT gate $\text{CNOT}(i, j)$ controlled by the i-th qubit and targeting on the j-th qubit corresponds precisely to the type-3 elementary matrix $E(j + i)$.
    Note that the type-1 elementary matrix can also be decomposed as type-3 elementary matrix: $E(i \leftrightarrow j) = E(i + j) E(j + i) E(i + j)$.
    Therefore, linear transformation can be implemented using only CNOT gates.

    \begin{example}
        Assume $n = 3$. Then,
        \begin{gather*}
            E(1 \leftrightarrow 3) = \begin{pmatrix}
                0 & 0 & 1 \\
                0 & 1 & 0 \\
                1 & 0 & 0
            \end{pmatrix},
            E(1 \leftrightarrow 3) \begin{pmatrix}
                x_{1} \\
                x_{2} \\
                x_{3}
            \end{pmatrix} = \begin{pmatrix}
                x_{3} \\
                x_{2} \\
                x_{1}
            \end{pmatrix}, \\
            E(1 + 3) = \begin{pmatrix}
                1 & 0 & 1 \\
                0 & 1 & 0 \\
                0 & 0 & 1
            \end{pmatrix},
            E(1 + 3) \begin{pmatrix}
                x_{1} \\
                x_{2} \\
                x_{3}
            \end{pmatrix} = \begin{pmatrix}
                x_{1} + x_{3} \\
                x_{2} \\
                x_{3}
            \end{pmatrix} .
        \end{gather*}
    \end{example}

    It is easy to see that the type-3 elementary matrices can be adjusted to be adjacent by the following property
    \begin{gather}
        E(i + j) E(k \leftrightarrow l) = E(k \leftrightarrow l) E(f_{k,l}(i) + f_{k,l}(j)) \label{eq:type-3-move} , \\
        E(k \leftrightarrow l) E(i + j) = E(f_{k,l}(i) + f_{k,l}(j)) E(k \leftrightarrow l) ,
    \end{gather}
    where
    \begin{equation}
        f_{k,l}(x) = \begin{cases}
            k, & \text{if } x = l , \\
            l, & \text{if } x = k , \\
            x, & \text{else}.
        \end{cases}
    \end{equation}
    Accordingly, we have the following theorem.
    \begin{theorem}
        Any $A$ in $GL(\mathbb{F}_{2}, n)$ can be decomposed as
        \begin{equation}
            A = E(i_{1} + j_{1}) \dots E(i_{s} + j_{s}) \cdot P \cdot E(i_{s + 1} + j_{s + 1}) \dots E(i_{t} + j_{t}) \label{eq:final_solution}
        \end{equation}
        where $P$ is a permutation matrix (each row or column has $(n - 1)$ zeros and one $1$).
    \end{theorem}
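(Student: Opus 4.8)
The plan is to start from the decomposition into elementary matrices guaranteed by Theorem 1, and then use the commutation relations (1)--(3) to push all the permutation (type-1) matrices to a single location in the product. Concretely, write $A = G_1 G_2 \cdots G_m$ where each $G_k$ is either a type-3 matrix $E(i+j)$ or a type-1 matrix $E(k \leftrightarrow l)$ (recall type-2 matrices are trivial over $\mathbb{F}_2$). Since any permutation matrix is a product of transposition matrices $E(k \leftrightarrow l)$, it suffices to show that any product of elementary matrices of types 1 and 3 can be rearranged into the stated normal form: a block of type-3 matrices, then a single permutation matrix $P$, then another block of type-3 matrices. (One could even collapse all of them to one side, but the statement as given allows $P$ to sit anywhere, so I will prove the stronger ``move all permutations to the middle'' version, which trivially implies it.)

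The key step is an inductive ``bubble-sort''-style argument on the number of type-1 factors appearing to the left of a chosen pivot position. First I would fix a target position (say, immediately after the first $s$ type-3 matrices, chosen so that $s$ counts the type-3 matrices occurring before the first type-1 matrix in the original product). Then, using relation (1), $E(i+j)\,E(k \leftrightarrow l) = E(k \leftrightarrow l)\,E(f_{k,l}(i)+f_{k,l}(j))$, I can swap any adjacent pair consisting of a type-3 matrix followed by a type-1 matrix, at the cost of relabeling the indices of the type-3 matrix via $f_{k,l}$; crucially, a type-3 matrix remains type-3 under this relabeling, so the ``shape'' of the product is preserved. Repeatedly applying this, I move each type-1 matrix leftward past the type-3 matrices until all type-1 matrices are consecutive. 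Then I multiply the consecutive type-1 matrices together into a single permutation matrix $P$ (a product of permutation matrices is a permutation matrix). Symmetrically, relation (2) handles the case where one instead wants to move type-3 matrices past a type-1 matrix on the right; either relation alone suffices to complete the rearrangement, since moving all type-1 factors left is equivalent to moving all type-3 factors right.

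To make the induction airtight I would set up a potential function, e.g. the number of ordered pairs $(p,q)$ with $p < q$ such that $G_p$ is type-1 and $G_q$ is type-3 in the current product (an ``inversion count''). Each application of (1) on an adjacent type-3/type-1 pair decreases this count by exactly one while keeping the list of matrix-types-as-a-multiset unchanged and keeping the product equal to $A$; hence after finitely many steps the count is zero, meaning all type-1 matrices precede all type-3 matrices. At that point the prefix of type-1 matrices multiplies to a permutation $P$, and we have $A = P \cdot E(i_{s+1}+j_{s+1}) \cdots E(i_t+j_t)$, which is a special case (with $s=0$) of the claimed form; the general form with a nonempty left block is obtained if one instead only pushes the type-1 matrices that occur after the initial type-3 run.

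The main obstacle, such as it is, is purely bookkeeping: one must verify that the index-relabeling map $f_{k,l}$ composes correctly when a type-3 matrix is pushed past several transpositions in succession (it just composes the corresponding permutations of indices), and that $f_{k,l}(i) \neq f_{k,l}(j)$ whenever $i \neq j$, so that $E(f_{k,l}(i)+f_{k,l}(j))$ is still a legitimate type-3 elementary matrix rather than something undefined. Both facts are immediate from the definition of $f_{k,l}$ as a transposition of the index set. There is no deep difficulty here; the content is entirely that the commutation relations (1)--(3) are ``shape-preserving,'' and a termination argument then yields the normal form.
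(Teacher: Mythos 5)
Your argument is essentially the paper's: the paper presents this theorem as an immediate consequence of Theorem~1 (over $\mathbb{F}_{2}$ every factor is type-1 or type-3, and each is its own inverse) together with the commutation relations (1)--(3), and your bubble-sort rearrangement simply makes that explicit; taking $s=0$ is indeed enough, since the stated form allows an empty left block. One small correction: your potential function is oriented backwards --- each application of relation (1) turns an adjacent (type-3, type-1) pair into (type-1, type-3), so the number of pairs with a type-1 factor preceding a type-3 factor \emph{increases} to its maximum rather than decreasing to zero; count instead the pairs where a type-3 precedes a type-1 (these decrease by one per swap), and the termination argument, and hence the proof, goes through.
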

    Remind that $P$ can be realized for free through rewiring.
    We need to find a sequence of elementary matrices $E(i_{1} + j_{1}), \dots, E(i_{t} + j_{t})$ such that
    \begin{equation}
        E(i_{s} + j_{s}) \dots E(i_{1} + j_{1}) \cdot A \cdot E(i_{t} + j_{t}) \dots E(i_{s + 1} + j_{s + 1}) = P \label{eq:solution} .
    \end{equation}
    The equation relies on the fact that $E(i + j)$ over $\mathbb{F}_{2}$ is involution, i.e., $E(i + j)^{-1} = E(i + j)$.

    Note that by left-multiplying $E(i + j)$ to $M$, we add the $j$-th row to $i$-th row of $M$.
    By right-multiplying $E(i + j)$ to $M$, we add $i$-th column to $j$-th column of $M$.
    That is, by applying gate $CNOT(j, i)$ after quantum circuit for $M$, we perform an elementary row operation.
    By applying gate $CNOT(j, i)$ before quantum circuit for $M$, we perform an elementary column operation.

    \paragraph{Metrics}

    \cite{ToSC:XZLBZ20} used g-XOR and s-XOR to name two metrics regarding the XOR count of linear layers.
    g-XOR is the number of XOR operations to compute the outputs from given inputs, while s-XOR is the number of \emph{in-place} XOR operations to compute the outputs.
    The key difference is that g-XOR stores the result in a new variable whereas s-XOR forces in-place update.
    The two concepts are formally defined as follows.

    \begin{definition}[g-XOR]
        Given a linear matrix $M_{m \times n}$ over $\mathbb{F}_{2}$.
        The g-XOR count of $M$ is the number of operations $t_{i} = t_{j_1} \oplus t_{j_2} (1 \leq j_1, j_2 < i)$ such that the outputs $\mathbf{y} = M \mathbf{x}$ are subset of all $t_{i}$s, where $\mathbf{x}$ are the inputs and $t_{k} = x_{k}$ for $k \leq n$.
    \end{definition}

    \begin{definition}[s-XOR]
        Let $M \in GL(n, \mathbb{F}_{2})$ be an invertible matrix.
        The s-XOR count of $M$ is the number of XOR instructions $x_{i} \gets x_{i} \oplus x_{j} (1 \leq i, j \leq n)$ to update the inputs $\mathbf{x}$ to the outputs $\mathbf{y} = M \mathbf{x}$.
    \end{definition}

    \begin{example}
        Let $M$ be the matrix $\begin{pmatrix}
            1 & 0 & 0 & 0 \\
            0 & 1 & 0 & 0 \\
            1 & 1 & 1 & 0 \\
            1 & 1 & 0 & 1
        \end{pmatrix}$.
        The g-XOR count of $M$ is 3.
        \begin{equation}
            \begin{aligned}
                t_{5} &= x_{1} \oplus x_{2} \\
                t_{6} &= t_{5} \oplus x_{3} = y_{3} \\
                t_{7} &= t_{5} \oplus x_{4} = y_{4}
            \end{aligned}
        \end{equation}
        The s-XOR count of $M$ is 4.
        \begin{equation}
            \begin{aligned}
                x_{3} & \gets x_{3} \oplus x_{1} \\
                x_{3} & \gets x_{3} \oplus x_{2} \\
                x_{4} & \gets x_{4} \oplus x_{1} \\
                x_{4} & \gets x_{4} \oplus x_{2}
            \end{aligned}
        \end{equation}
    \end{example}

    \begin{definition}[Depth]
        Given an implementation $\mathcal{I}$ of $A$, the depth of a node $t$ in $\mathcal{I}$ is defined as the length of the longest path from an input node to $t$ in the implementation graph, denoted by $d(t)$.
        In particular, the depth of all input nodes is defined as 0.
        The depth of $\mathcal{I}$ is defined as the maximum depth of all output nodes denoted by $d(I)$, that is $d(\mathcal{I}) = \max_{0 \leq i < m} d(y_i)$.
    \end{definition}
    The depth of nodes follows the property:
    \begin{property}
        For three nodes $t_1, t_2, t_3$ in an implementation, if $t_1$ is generated by $t_2$ and $t_3$, then $d(t_1) = \max \{ d(t_2), d(t_3) \} + 1$.
    \end{property}

    \paragraph{Field Extension}

    A polynomial $f(x)$ in $\mathbb{F}[x]$ is said to be irreducible if there do not exist two non-constant polynomial $g(x), h(x)$ in $\mathbb{F}[x]$ such that $f(x) = g(x) h(x)$.
    For example, $x^{2} + x + 1$ is irreducible in $\mathbb{F}_{2}[x]$, but $x^{2} + x$ is not, since $(x + 1)(x + 1) = x^{2} + 2x + 1 \equiv x^{2} + 1 \pmod{2}$.
    Up to isomorphism, field with $2^{k}$ elements is equal to the polynomial ring over $\mathbb{F}_{2}$ modulo an irreducible polynomial $g(x)$ of degree $k$, $\mathbb{F}_{2^k} \cong \mathbb{F}_{2}[x] \setminus \langle g(x) \rangle$.
    For more details, readers can check Galois theory in textbook~\cite{artinAlgebra2018}.

    Elements in $\mathbb{F}_{2^k}$ can be viewed as a vector in $\mathbb{F}_{2}^{k}$.
    Then, multiplication by an element $\alpha \in \mathbb{F}_{2^k}$ can be described as a left multiplication with a matrix $T_{\alpha} \in \mathbb{F}_{2}^{k \times k}$.
    This gives an isomorphism from $\mathbb{F}_{2^k}$ to $\mathbb{F}_{2}^{k \times k}$.
    The matrix $T_{\alpha}$ can be called as the representation of $\alpha$.
    For more details, readers can check group representation theory in textbook~\cite{artinAlgebra2018}.

    \begin{example}
        Take $g(x) = x^{2} + 1$ and $\bar{1} = 1$, $\bar{2} = x$, $\bar{3} = x + 1$.
        As $\bar{2} \cdot 1 = x$, $\bar{2} \cdot x = x^{2} \equiv 1 \pmod{x^2 + 1}$, $\bar{2}$ can be represented by
        \begin{equation}
            T_{2} = \begin{pmatrix}
                0 & 1 \\
                1 & 0
            \end{pmatrix} .
        \end{equation}
        Similarly, we can calculate that
        \begin{equation}
            T_{1} = \begin{pmatrix}
                1 & 0 \\
                0 & 1
            \end{pmatrix}, T_{3} = \begin{pmatrix}
                1 & 1 \\
                1 & 1
            \end{pmatrix} .
        \end{equation}
        We can verify that $\Phi: \alpha \mapsto T_{\alpha}$ is an isomorphism from $\mathbb{F}_{2^2}$ to $\mathbb{F}_{2}^{2 \times 2}$.
        For example,
        \begin{gather}
            \Phi(\bar{1} + \bar{2}) = \Phi(\bar{1}) + \Phi(\bar{2}) = \begin{pmatrix}
                1 & 0 \\
                0 & 1
            \end{pmatrix}
            + \begin{pmatrix}
                0 & 1 \\
                1 & 0
            \end{pmatrix}
            = \begin{pmatrix}
                1 & 1 \\
                1 & 1
            \end{pmatrix} = \Phi(\bar{3}), \\
            \Phi(\bar{2} \cdot \bar{3}) = \Phi(\bar{2}) \cdot \Phi(\bar{3}) = \begin{pmatrix}
                0 & 1 \\
                1 & 0
            \end{pmatrix}
            \begin{pmatrix}
                1 & 1 \\
                1 & 1
            \end{pmatrix}
            = \begin{pmatrix}
                1 & 1 \\
                1 & 1
            \end{pmatrix} = \Phi(\bar{3}) .
        \end{gather}
    \end{example}

    \paragraph{MDS Matrix}

    The hamming weight $\mathrm{hw}_{k}(v)$ of vector $v \in \mathbb{F}_{2^k}^{n}$ is defined to be the number of nonzero entries in $v$.
    The branching number $\mathrm{bn}_{k}(M)$ of matrix $M \in \mathbb{F}_{2^k}^{n \times n}$ is defined to be $\min_{u \in \mathbb{F}_{2^k}^{n} \setminus \{0\}} \{ \mathrm{hw}_{k}(u) + \mathrm{hw}_{k}(M u) \}$.
    Matrix $M \in \mathbb{F}_{2^k}^{n \times n}$ is MDS (Maximum Distance Separable) if $\mathrm{bn}_{k}(M) = n + 1$.
    MDS matrices are commonly used in the design of linear layer of block ciphers as they help resist differential and linear attacks.

    A $n \times n$ matrix $M$ is circulant if there exists $a_{0}, a_{1}, \dots, a_{n - 1}$ such that $M_{i,j} = a_{(j - i) \bmod n}$.
    \begin{equation}
        M = \begin{pmatrix}
            a_{0} & a_{1} & \dots & a_{n-1} \\
            a_{n-1} & a_{0} & \dots & a_{n-2} \\
            \vdots & \vdots & \ddots & \vdots \\
            a_{1} & a_{2} & \dots & a_{0}
        \end{pmatrix}
    \end{equation}
    Constructions, like circulant, are often adopted to reduce the search space and increase the probability of finding an MDS matrix~\cite{daemenBlockCipherSquare1997}.

    \section{Previous Work}

    In this section, we introduce some existing heuristic methods for synthesizing general linear transformations.

    Xiang~\etal~\cite{ToSC:XZLBZ20} gave good implementations of many matrices in terms of the gate count.
    They randomly picked one row or column operation that could reduce most number of ones in the matrix.
    If no operation could reduce the number of ones, they fall back to Gaussian Elimination.
    They also introduced several peephole optimization rules to further minimize the gate count.

    De~Brugi{\`e}re~\etal~\cite{brugiereReducingDepthLinear2021} proposed a cost-minimization algorithm to find low-depth CNOT circuits for reversible linear functions.
    Their approach maintains two sets: $L_{r}$ which contains row operations that can be applied in parallel, and $L_{c}$, which contains column operations that can be applied in parallel.
    The algorithm proceeds by randomly selecting an operation that can be inserted into $L_{c}$ or $L_{r}$ and minimize the cost function.
    If no such operation exists, the algorithms resets $L_{c}$ and $L_{r}$ to empty and start searching operations for next layer.
    The algorithm terminates when the current matrix becomes a permutation matrix, or when the depth exceeds a specific threshold.
    In their work, the authors employed the following four cost functions:
    \begin{gather}
        h_{sum}(A) = \sum_{i, j} a_{ij} , \\
        H_{sum}(A) = h_{sum}(A) + h_{sum}(A^{-1}) , \\
        h_{prod}(A) = \sum_{i} \log_{2} \left( \sum_{j} a_{ij} \right) , \\
        H_{prod}(A) = h_{prod}(A) + h_{prod}(A^{-1}) .
    \end{gather}
    Through experiments with random matrices, they observed that this depth-oriented greedy method performs well for small $n$ (typically $n < 40$).

    Zhu and Huang~\cite{zhuOptimizingDepthQuantum2023} introduced an algorithm called \texttt{one-way-opt} to optimize the depth of a given CNOT circuit.
    Reordering gate is a common approach to reduce the depth of quantum circuit.
    They reported that most quantum resource estimators only consider move-equivalence.
    That is, $CNOT(i, j)$ can be moved in front of $CNOT(i', j')$ if $\{ i, j \} \cap \{ i', j' \} = \varnothing$.
    They proposed that move-equivalence shold be relaexed to exchange-equivalence.
    That is, $CNOT(i, j)$ can be moved forward if $i \neq j'$ and $i' \neq j$.
    The \texttt{one-way-opt} algorithm maintains a set $L$ of CNOT gates that can be applied in parallel and tries to move gates into $L$ under exchange-equivalence.
    The algorithm is applied twice in forward and backward direction to reduce the depth of circuits given by Xiang~\etal~\cite{ToSC:XZLBZ20}.

    Yuan~\etal~improved upon Xiang~\etal's algorithm by extending reduction rules.
    They searched for complete reduction rules within length $K$ by converting it to a graph isomorphism problem.
    Depth reduction was achieved for many matrices compared with \cite{zhuOptimizingDepthQuantum2023}.

    Shi and Feng~\cite{AC:ShiFen24} gave state-of-the-art implementations of many matrices to our knowledge in terms of the circuit depth.
    They followed the framework of de Brugi{\`e}re~\etal~\cite{brugiereReducingDepthLinear2021} and used different cost functions:
    \begin{gather}
        \hat{H}_{prod}(A) = \max \{ h_{prod}(A) + h_{prod}((A^{-1})^{T}), h_{prod}(A^{T}) + h_{prod}(A^{-1}) \} , \\
        \hat{H}_{sq}(A) = \max \{ h_{sq}(A) + h_{sq}((A^{-1})^{T}), h_{sq}(A^{T}) + h_{sq}(A^{-1}) \} ,
    \end{gather}
    where
    \begin{equation}
        h_{sq}(A) = \sum_{i} \left( \sum_{j} a_{ij} \right)^{2} .
    \end{equation}
    They also checked whether a matrix can be transformed into permutation matrix using only 1 layer of CNOT gates before searching a new CNOT layer.

    Zhang~\etal~\cite{zhangOptimizedQuantumCircuit2024} proposed a new strategy for synthesizing AES MixColumn.
    They first viewed it as a matrix over $\mathbb{F}_{2^{8}}$ and transformed it into a simpler form.
    Later, they viewed it as a matrix over $\mathbb{F}_{2}$ and reduced it into identity matrix.
    They obtained the best known quantum circuit depth for AES MixColumn through this strategy.
    However, the proposed strategy has limitations worth noting.
    The transformation process relies on manual intervention, particularly in determining what constitutes a ``simple form'' -- a concept that remains undefined in their work.
    The second step requires careful adjustment of gate order, making the process difficult to automate \footnote{In the appendix, we provide an algorithm that generates a circuit distinct from Zhang~\etal's manual implementation, yet achieves the the same depth and size.}.

    \section{Optimizing the Synthesis of Circulant Matrices}

    In this section, we focus on circulant matrices.
    In Section 4.1, we introduce the method of \cite{zhangOptimizedQuantumCircuit2024}, which inspires us.
    In Section 4.2, we prove that the idea to simplify AES MixColumn as matrix over $\mathbb{F}_{2^8}$ can be generalized to transform circulant matrix over $\mathbb{F}_{2}[x]$ to upper-triangular matrix.
    In Section 4.3, we state that it is beneficial and usually affordable to explore different transformations.
    In Section 4.4, we give the observation that the recursion could stop early and the intuition behind this idea.
    In Section 4.5, we share some numerical results.

    \subsection{Method of \cite{zhangOptimizedQuantumCircuit2024} for AES MixColumn}

    As our method is inspired by \cite{zhangOptimizedQuantumCircuit2024}, we will describe their method in more details.
    Their method is targeting AES MixColumn only.
    The key point for the method of \cite{zhangOptimizedQuantumCircuit2024} for synthesizing AES MixColumn is that you should view it as a $4 \times 4$ matrix over $\mathbb{F}_{2^{8}}$ first.
    Previous methods, like \cite{AC:ShiFen24}, directly view it as a $32 \times 32$ matrix over $\mathbb{F}_{2}$.
    When it is treated as a matrix over $\mathbb{F}_{2^8}$, it is equivalent to considering the overall transformations between block matrices when viewed as a matrix over $\mathbb{F}_{2}$.
    This could enhance the circuit's parallelism and reduce the circuit depth.
    When viewed as a matrix over $\mathbb{F}_{2}$, heuristic methods, such as \cite{AC:ShiFen24}, are unlikely to capture the algebraic structure.

    The AES MixColumn is defined as
    \begin{equation}
        MC = \begin{pmatrix}
            02 & 03 & 01 & 01 \\
            01 & 02 & 03 & 01 \\
            01 & 01 & 02 & 03 \\
            03 & 01 & 01 & 02
        \end{pmatrix}_{\mathbb{F}_{2^8}} .
    \end{equation}
    Through search, they found that $MC$ can be transformed into a ``simpler matrix''
    \begin{equation}
        M_{1} = \begin{pmatrix}
            01 & 02 & 00 & 01 \\
            00 & 01 & 02 & 00 \\
            00 & 00 & 01 & 02 \\
            00 & 00 & 00 & 01
        \end{pmatrix}
    \end{equation}
    using elementary row and column operations.
    They argued that one should not use an even simpler matrix
    \begin{equation}
        M_{2} = \begin{pmatrix}
            01 & 02 & 00 & 00 \\
            00 & 01 & 02 & 00 \\
            00 & 00 & 01 & 02 \\
            00 & 00 & 00 & 01
        \end{pmatrix}
    \end{equation}
    which can also be obtained using elementary row and column operations.
    The reason is that using $M_{2}$ instead of $M_{1}$ needs one more layer of circuits in this step and the $01$ in the upper right corner of $M_{1}$ can be eliminated in parallel with subsequent operations.

    Later, they viewed $M_{1}$ as matrix over $\mathbb{F}_{2}$.
    Then, they designed a circuit of depth $4$ to nullify one $02$ in $M_{1}$.
    A circuit to transform $M_{1}$ to identity matrix can be obtained by putting the circuit to nullify $01$ and three circuits to nullify $02$ together.
    Through careful adjustment of gate order, they increased the parallelism of the circuit and reduced the depth of this part to $6$.
    Finally, they got a CNOT circuit for AES MixColumn of depth 10.

    The limitation of this method is that it requires human judgment on what is ``simpler matrix''.
    Later, we will generalize the idea of~\cite{zhangOptimizedQuantumCircuit2024} for AES MixColumn to circulant matrix and propose an algorithm without need of human intervention.

    \subsection{Our Method}

    In this subsection, we generalize the idea of \cite{zhangOptimizedQuantumCircuit2024} to circulant matrix over $\mathbb{F}_{2}[x]$.
    As $\mathbb{F}_{2^{k}}$ is obtained through $\mathbb{F}_{2}[x] / \langle g(x) \rangle$ where $g(x)$ is an irreducible polynomial, our method can cover circulant matrix over $\mathbb{F}_{2^{k}}$ for any $k$.
    Our method can also handle circulant matrix over $GL(\mathbb{F}_{2}, m)$ as there is a mapping from $GL(\mathbb{F}_{2}, m)$ to $\mathbb{F}_{2}[x]$, which preserves the property of addition.
    For example,
    \begin{equation}
        \begin{pmatrix}
            1 & 1 \\
            0 & 1
        \end{pmatrix} \mapsto (1, 1, 0, 1) \mapsto x^3 + x^2 + 0 \cdot x + 1 .
    \end{equation}

    The following theorem proves that circulant matrix over $\mathbb{F}_{2}[x]$ can be transformed into upper-triangular matrix.
    We replace the vague concept of ``simpler matrix'' in \cite{zhangOptimizedQuantumCircuit2024} with explicit upper-triangular matrix.
    The transformation has the chance to reduce the number of nonzero elements, as roughly half of the matrix are guaranteed to be zero.
    \begin{theorem} \label{thm:circ_tri}
        $2^{k} \times 2^{k}$ circulant matrix over $\mathbb{F}_{2}[x]$ can be transformed into upper-triangular matrix using $k 2^{k}$ elementary operations.
    \end{theorem}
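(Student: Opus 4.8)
The plan is to find one fixed change of basis that upper-triangularizes \emph{every} $2^{k}\times 2^{k}$ circulant matrix over $R:=\mathbb{F}_{2}[x]$ simultaneously, and then to show it can be carried out with $k 2^{k-1}$ elementary row operations together with $k 2^{k-1}$ elementary column operations.

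Let $S$ be the $2^{k}\times 2^{k}$ cyclic shift matrix over $R$, so every such circulant matrix has the form $C=\sum_{i=0}^{2^{k}-1}c_{i}S^{i}$ with $c_{i}\in R$. The decisive point is that we are in characteristic $2$ and the size is a power of $2$: by the Frobenius identity $(S+I)^{2^{k}}=S^{2^{k}}+I=I+I=0$, so $N:=S+I$ is nilpotent. Identifying $e_{i}$ with $y^{i}$ in $R[y]/(y^{2^{k}}-1)$ gives $N^{i}e_{0}=(1+y)^{i}$, whose $j$-th coordinate is $\binom{i}{j}\bmod 2$; hence $N^{i}e_{0}$ has a $1$ in position $i$ and zeros below, so the matrix $V$ with columns $e_{0},Ne_{0},\dots,N^{2^{k}-1}e_{0}$ is unitriangular, hence invertible, and $\{N^{i}e_{0}\}_{i}$ is a basis. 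In this basis $N$ acts as the sub-diagonal shift $L$, so $V^{-1}NV=L$, hence $V^{-1}SV=I+L$ and $V^{-1}CV=\sum_{i}c_{i}(I+L)^{i}$ is lower-triangular for \emph{every} circulant $C$. Applying this to $C^{\mathsf T}$ (which is again circulant, being $\sum_{i}c_{i}S^{-i}$) and transposing shows that conjugation by $V^{\mathsf T}$ yields the upper-triangular form claimed, at the same cost.

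For the cost, observe that by Lucas' theorem the entries of $V$ form the binary Pascal triangle, so $V=T^{\otimes k}$ with $T=\begin{pmatrix}1&1\\0&1\end{pmatrix}$, and $V^{-1}=V$ over $\mathbb{F}_{2}$. Transforming $C\mapsto V^{-1}CV=VCV$ amounts to applying $V$ as a sequence of elementary row operations (left multiplication) and then as a sequence of elementary column operations (right multiplication). Using the block form $V=\begin{pmatrix}T^{\otimes(k-1)}&T^{\otimes(k-1)}\\0&T^{\otimes(k-1)}\end{pmatrix}$, left multiplication by $V$ is realized by first adding each row of the lower half to the matching row of the upper half ($2^{k-1}$ additions), then applying $T^{\otimes(k-1)}$ recursively within the upper half and within the lower half. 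This gives $g(k)=2^{k-1}+2g(k-1)$ with $g(0)=0$, so $g(k)=k2^{k-1}$; the column pass is symmetric, for $k2^{k}$ operations in all, and conjugation by $V^{\mathsf T}$ (block form transposed) costs the same.

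The coordinate formula for $N^{i}e_{0}$, the Kronecker factorization of $V$, and the solution of the recursion are all short computations. The genuinely load-bearing step — and the place the argument could break — is the algebraic one: it is special to characteristic $2$ and to sizes $2^{k}$ that the cyclic shift is unipotent with a \emph{single} Jordan block, so that one universal conjugation triangularizes the whole circulant algebra at once, and it is special to the binary Pascal matrix that this conjugation has an $O(k2^{k-1})$-operation (and depth-$k$) realization. Pinning the count down to exactly $k2^{k}$ rather than $O(k2^{k})$ is what forces the precise block recursion.
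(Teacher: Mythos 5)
Your proof is correct, but it takes a genuinely different route from the paper's. The paper argues recursively: it writes the circulant $M$ as $\begin{pmatrix} A & B \\ B & A \end{pmatrix}$, spends $2^{k-1}$ row additions and $2^{k-1}$ column additions to reach $\begin{pmatrix} A+B & B \\ 0 & A+B \end{pmatrix}$, proves as its key lemma that $A+B$ is again circulant, and recurses on the diagonal blocks, each of the $k$ levels costing $2^{k}$ operations. You instead exhibit one fixed, matrix-independent conjugation: since $(S+I)^{2^{k}}=0$ in characteristic $2$, the shift is unipotent with a single Jordan block, and the binary Pascal matrix $V=T^{\otimes k}$ (self-inverse over $\mathbb{F}_2$) conjugates the whole circulant algebra $\sum_i c_i S^i$ into triangular form at once, with the count $k2^{k}$ falling out of the recursion $g(k)=2^{k-1}+2g(k-1)$ for realizing each side of the conjugation; your transpose trick to convert lower- into upper-triangular, the identity $V^{-1}=V$, and the Kronecker factorization all check out. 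The two arguments are closely related: unrolling the paper's recursion with one canonical choice of operations at every level yields precisely your conjugation by $V^{\mathsf T}=(T^{\mathsf T})^{\otimes k}$, so yours is essentially the closed form of one branch of the paper's recursion. What each buys is different. Your version is conceptually cleaner, makes explicit that the transformation is universal (independent of the entries $c_i$) and that it has a depth-$k$ realization per side. The paper's blockwise formulation deliberately leaves the direction of each block addition unspecified, because its Improvement 1 enumerates those $4$ choices per block (about $4^{2^{k}-1}$ transformation sequences) to generate many candidate matrices for the downstream heuristic, and the block view also tracks structural properties of the blocks (e.g.\ the Toeplitz blocks of Whirlwind $M_0$) that the paper credits for some of its gains; a single canonical $V$ would forgo that search space. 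Both establish the stated bound of $k2^{k}$ elementary operations.
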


    \begin{proof}
        Assume $M$ is a $2^{k} \times 2^{k}$ circulant matrix over $\mathbb{F}_{2}[x]$. As $M$ is circulant, it can be written as
        \begin{equation}
            M = \begin{pmatrix}
                A & B \\
                B & A
            \end{pmatrix}
        \end{equation}
        where $A$ and $B$ are $2^{k - 1} \times 2^{k - 1}$ matrices over $\mathbb{F}_{2}[x]$.
        Through the following operations, we can transform $M$ into a block upper-triangular matrix $M'$:
        \begin{equation} \label{eq:trans_circ_tri}
            \begin{pmatrix}
                A & B \\
                B & A
            \end{pmatrix}
            \xrightarrow{r_{2} \gets r_{2} + r_{1}}
            \begin{pmatrix}
                A & B \\
                A+B & A+B
            \end{pmatrix}
            \xrightarrow{c_{1} \gets c_{1} + c_{2}}
            \begin{pmatrix}
                A+B & B \\
                & A+B
            \end{pmatrix} = M' .
        \end{equation}
        If $A$ (or $B$) is $1 \times 1$ matrices over $\mathbb{F}_{2}[x]$, we have reached the goal.
        Otherwise, we claim that $A + B$ is a circulant matrix.
        Under this claim, we could apply the above sequence of operations recursively to $A + B$ and obtain a upper-triangular matrix.

        Now, we prove the claim. Assume
        \begin{equation}
            A = \begin{pmatrix}
                A_{1} & A_{2} \\
                A_{3} & A_{4}
            \end{pmatrix} ,
            B = \begin{pmatrix}
                B_{1} & B_{2} \\
                B_{3} & B_{4}
            \end{pmatrix} .
        \end{equation}
        As $M$ is circulant,
        \begin{equation}
            M = \begin{pNiceArray}{cc|cc}
                A_{1} & A_{2} & B_{1} & B_{2} \\
                A_{3} & A_{4} & B_{3} & B_{4} \\ \hline
                B_{1} & B_{2} & A_{1} & A_{2} \\
                B_{3} & B_{4} & A_{3} & A_{4}
            \end{pNiceArray}
        \end{equation}
        we have
        \begin{equation}
            A_{3} = B_{2}, A_{4} = A_{1}, B_{3} = A_{2}, B_{4} = B_{1} .
        \end{equation}
        Therefore,
        \begin{equation}
            A + B = \begin{pmatrix}
                A_{1} + B_{1} & A_{2} + B_{2} \\
                A_{3} + B_{3} & A_{4} + B_{4}
            \end{pmatrix}
            = \begin{pmatrix}
                A_{1} + B_{1} & A_{2} + B_{2} \\
                B_{2} + A_{2} & A_{1} + B_{1}
            \end{pmatrix} .
        \end{equation}
        To prove $A + B$ is circulant, we need to show that
        \begin{equation}
            \forall i, \forall j, (A + B)_{i, j} = (A + B)_{(i - 1) \bmod 2^{k - 1}, (j - 1) \bmod 2^{k - 1}} .
        \end{equation}
        Within $A_{1}$ (or equivalently $A_{2}$, $B_{1}$, $B_{2}$), the condition is naturally satisfied as it is a submatrix of $M$.
        For the left boundaries of and $A_{2} + B_{2}$, we have
        \begin{equation}
            \begin{aligned}
                (A_{2} + B_{2})_{0, 0} &= (A_{2})_{0, 0} + (B_{2})_{0, 0} \\
                &= (B_{3})_{2^{k-1} - 1, 2^{k-1} - 1} + (A_{3})_{2^{k-1} - 1, 2^{k-1} - 1} \\
                &= (A_{2})_{2^{k-1} - 1, 2^{k-1} - 1} + (B_{2})_{2^{k-1} - 1, 2^{k-1} - 1} \\
                &= (B_{2} + A_{2})_{2^{k-1} - 1, 2^{k-1} - 1}
            \end{aligned}
        \end{equation}
        and
        \begin{equation}
            \begin{aligned}
                (A_{2} + B_{2})_{0, j} &= (A_{2})_{0, j} + (B_{2})_{0, j} \\
                &= (A_{1})_{2^{k-1} - 1, j - 1} + (B_{1})_{2^{k-1} - 1, j - 1} \\
                &= (A_{1} + B_{1})_{2^{k-1} - 1, j - 1} .
            \end{aligned} \quad (j > 0)
        \end{equation}
        Similarly, readers can verify that the condition holds for other boundaries.

        Finally, we count the number of elementary operations.
        The recursion has depth $\log_{2}(2^{k}) = k$.
        In each layer, transformation \eqref{eq:trans_circ_tri} requires $2 \times \frac{1}{2} 2^{k} = 2^{k}$ elementary operations.
    \end{proof}

    \subsection{Improvement 1}

    In this subsection, we show that different choices of the row and column operations can lead to different results. Therefore, when affordable, it should be beneficial to try various transformations.

    Instead of transformation \eqref{eq:trans_circ_tri}, one could also use
    \begin{equation}
        \begin{pmatrix}
            A & B \\
            B & A
        \end{pmatrix}
        \xrightarrow{r_{1} \gets r_{1} + r_{2}}
        \begin{pmatrix}
            A+B & A+B \\
            B & A
        \end{pmatrix}
        \xrightarrow{c_{1} \gets c_{1} + c_{2}}
        \begin{pmatrix}
            & A+B \\
            A+B & A
        \end{pmatrix} = M'' .
    \end{equation}
    This leads to a different result $M''$, which may be better than $M'$.
    Even the following transformation may give a different result:
    \begin{equation} \label{eq:trans_circ_tri_2}
        \begin{pmatrix}
            A & B \\
            B & A
        \end{pmatrix}
        \xrightarrow{r_{1} \gets r_{1} + r_{2}}
        \begin{pmatrix}
            A+B & A+B \\
            B & A
        \end{pmatrix}
        \xrightarrow{c_{2} \gets c_{2} + c_{1}}
        \begin{pmatrix}
            A+B & \\
            B & A+B
        \end{pmatrix} = M''' .
    \end{equation}
    $M'''$ is the same as $M'$ under permutation.
    However, when recursively handling $A + B$ in $M'$ or $M'''$, the elementary operation chosen for $A + B$ may have different effect on $B$, which is not circulant in general.
    \begin{example}
        \begin{gather}
            \begin{aligned}
                M' &= \begin{pmatrix}
                    C_{1} & C_{2} & B_{1} & B_{2} \\
                    C_{2} & C_{1} & B_{3} & B_{4} \\
                    & & C_{1} & C_{2} \\
                    & & C_{2} & C_{1}
                \end{pmatrix} \\
                & \xrightarrow[c_{1} \gets c_{1} + c_{2}, c_{4} \gets c_{4} + c_{3}]{r_{1} \gets r_{1} + r_{2}, r_{4} \gets r_{4} + r_{3}}
                \begin{pmatrix}
                    & C_{1} + C_{2} & B_{1} + B_{2} + B_{3} + B_{4} & B_{2} + B_{3} \\
                    C_{1} + C_{2} & C_{1} & B_{3} + B_{4} & B_{4} \\
                    & & C_{1} & C_{1} + C_{2} \\
                    & & C_{1} + C_{2} &
                \end{pmatrix} ,
            \end{aligned} \\
            \begin{aligned}
                M''' &= \begin{pmatrix}
                    C_{1} & C_{2} & & \\
                    C_{2} & C_{1} & & \\
                    B_{1} & B_{2} & C_{1} & C_{2} \\
                    B_{3} & B_{4} & C_{2} & C_{1}
                \end{pmatrix} \\
                & \xrightarrow[c_{1} \gets c_{1} + c_{2}, c_{4} \gets c_{4} + c_{3}]{r_{1} \gets r_{1} + r_{2}, r_{4} \gets r_{4} + r_{3}}
                \begin{pmatrix}
                    & C_{1} + C_{2} & & \\
                    C_{1} + C_{2} & C_{1} & & \\
                    B_{1} + B_{2} & B_{2} & C_{1} & C_{1} + C_{2} \\
                    B_{1} + B_{2} + B_{3} + B_{4} & B_{2} + B_{4}  & C_{1} + C_{2} &
                \end{pmatrix}
            \end{aligned}
        \end{gather}
        where $C_{1} = A_{1} + B_{1}$, $C_{2} = A_{2} + B_{2}$.
        The same operation transformed $B$ into
        \begin{gather}
            B' = \begin{pmatrix}
                B_{1} + B_{2} + B_{3} + B_{4} & B_{2} + B_{3} \\
                B_{3} + B_{4} & B_{4}
            \end{pmatrix} , \\
            B''' = \begin{pmatrix}
                B_{1} + B_{2} & B_{2} \\
                B_{1} + B_{2} + B_{3} + B_{4} & B_{2} + B_{4}
            \end{pmatrix}
        \end{gather}
        respectively.
        They are not equivalent up to permutation.
    \end{example}

    For each $M$, there are $2$ possibilities for row operations and $2$ possibilities for column operations.
    This makes up $4$ different transformations.
    The order of row operation and column operation does not matter here.
    As matrix multiplication is associative, we can do the left-multiplication or the right-multiplication first.
    In recursion depth $d$, there are $2^{d - 1}$ such matrices and hence $4^{2^{d - 1}}$ different transformations.
    There are
    \begin{equation}
        \prod_{d = 1}^{k} 4^{2^{d - 1}} = 4^{\sum_{i = 0}^{k - 1} 2^{i}} = 4^{2^{k} - 1}
    \end{equation}
    cases in total.
    In the first layer, transformation \eqref{eq:trans_circ_tri} and \eqref{eq:trans_circ_tri_2} are equivalent up to permutation, which could save a $1/2$ factor for us.
    When $k = 1, 2, 3$, $\frac{1}{2} \cdot 4^{2^{k} - 1}$ equals $2, 32, 32768$ respectively.
    It is acceptable to try all possibilities.
    If $k \geq 4$, randomness or heuristic strategy may be introduced to handle more than half a billion possibilities.
    Luckily, $k \leq 3$ usually holds for cryptographic purposes.
    ($k = 3$ means the matrix is designed to be a $8 \times 8$ matrix over $\mathbb{F}_{2^{m}}$ for some $m$.)

    The pseudo-code description is given in Algorithm~\ref{alg:circ_tri}.
    \texttt{block\_size} described the size of matrix when the recursion has to stop.
    For example, \texttt{block\_size} is $8$ for AES MixColumn as it is a circulant matrix over $\mathbb{F}_{2^8}$ (or $\mathbb{F}_{2}^{8 \times 8}$).
    \texttt{step} is used to implement recursion and should be $2^{n-1}$ at the beginning.
    The return value is a list of transformed matrices $M''$s and corresponding row operations $C''_{r}$ and column operations $C''_{c}$.

    \begin{algorithm}
        \caption{Algorithm to convert circulant matrix into matrices that may have low-depth CNOT circuit}
        \label{alg:circ_tri}
        \begin{algorithmic}
            \Require $2^{n} \times 2^{n}$ matrix $M$ over $\mathbb{F}_{2}$, which is circulant over $GL(\mathbb{F}_{2}, block\_size)$
            \Ensure List of candidate matrices and corresponding operations
            \Procedure{ProcessCirculant}{$M, block\_size, step$}
            \If{$step < block\_size$}
            \State \Return $\{ (M, (\varnothing, \varnothing)) \}$
            \EndIf
            \State $L \gets \varnothing$
            \If{$2^{n} / block\_size \leq \texttt{THRESHOLD}$}
            \Comment{typically \texttt{THRESHOLD} = $2^{1},2^{2},2^{3}$}
            \LComment{If $block\_size = 2^{n-1}$, could return just 00 and 01}
            \State $S \gets $ all 01-strings of length $2^{n} / step$
            \Else
            \State $S \gets $ a 01-string of length $2^{n} / step$
            \Comment{randomly or heuristically}
            \EndIf
            \For{$str \in S$}
            \State $M' \gets M$
            \State $C'_{r} \gets \varnothing$, $C'_{c} \gets \varnothing$
            \For{$i = 1, \dots, 2^{n-1} / step$}
            \Comment{row operations}
            \If{next bit in $str$ is $0$}
            \For{$j = 1, \dots, step$}
            \State $r_{1} \gets (i - 1) \times (2 \times step) + j$
            \State $r_{2} \gets r_{1} + step$
            \State Add row $r_{2}$ to row $r_{1}$ of $M'$
            \State $C'_{r}$.append($r_{2}, r_{1}$)
            \EndFor
            \Else
            \For{$j = 1, \dots, step$}
            \Comment{add row $r_{1}$ to row $r_{2}$}
            \State \dots
            \EndFor
            \EndIf
            \EndFor
            \For{$i = 1, \dots, 2^{n-1} / step$}
            \Comment{column operations}
            \State \dots
            \EndFor
            \For{$(M'', (C''_{r}, C''_{c})) \in \Call{ProcessCirculant}{M', block\_size, step / 2}$}
            \State $C''_{r}$.extend($C'_{r}$), $C''_{c}$.extend($C'_{c}$)
            \State $L$.append($M'', (C''_{r}, C''_{c})$)
            \EndFor
            \EndFor
            \State \Return $L$
            \EndProcedure
        \end{algorithmic}
    \end{algorithm}

    The algorithm first checks whether the recursion should stop.
    If so, just return current matrix and no operation is needed.
    $S$ is a collection of 01-strings that guides which operation should be performed.
    We checks the maximum recursion depth (i.e. $\log(2^{n} / \texttt{block\_size})$).
    If it is small enough, we try all possible transformations.
    Otherwise, we use some strategy, such as randomness, to determine the transformation.
    According to $S$, we produce transformed $M'$ and record corresponding $C_{r}'$ and $C_{c}'$.
    Then we recursively handle $M'$ by halving \texttt{step}.
    We concatenate $C_{r}'$ and returned $C_{r}''$ to get the row operation to transform $M$ into $M''$.

    \begin{example}
        Let $M = \begin{pmatrix}
            A & B \\
            B & A
        \end{pmatrix}$ where $A$ and $B$ is $2^{b} \times 2^{b}$ matrix.
        The algorithm should return
        \begin{gather}
            \begin{pmatrix}
                & A + B \\
                A + B & A
            \end{pmatrix}, \{ r_{1} \gets r_{1} + r_{2^{b} + 1}, \cdots \}, \{ c_{1} \gets c_{1} + c_{2^{b} + 1}, \dots \} , \\
            \begin{pmatrix}
                A + B & \\
                B & A + B
            \end{pmatrix}, \{ r_{1} \gets r_{1} + r_{2^{b} + 1}, \cdots \}, \{ c_{2^{b} + 1} \gets c_{2^{b} + 1} + c_{1}, \dots \} .
        \end{gather}
    \end{example}

    \subsection{Improvement 2}

    In this subsection, we share the idea of stopping recursion early. When the recursion depth is larger, the overhead might grow faster than the benefit.

    The second modification comes from the following intuition:
    in recursion depth 1, it guarantees $1/4$ of the matrix is $0$;
    however, in recursion depth 2, the same cost only guarantees $1/8$ of the matrix is $0$.
    The gain seems to decrease exponentially as the depth grows. (This is not technically correct, but serves as a good intuition.)
    Therefore, we could stop the recursion early.
    Experiments on matrices collected by Kranz~\etal~\cite{ToSC:KLSW17} show that stopping recursion at depth $3$, if ever possible, is almost unworthy compared to depth $1$ or $2$.
    Besides, for matrices that can be implemented with extremely low quantum circuit depth like PRIDE~\cite{C:ADKLPY14}, we should fallback to call underlying optimizer for general CNOT circuits directly.

    The detailed algorithm is given in Algorithm~\ref{alg:circ_full} with procedure \textsc{ProcessCirculant} in Algorithm~\ref{alg:circ_tri}.
    Every possible recursion depth is explored through changing \texttt{block\_size}.
    By increasing \texttt{block\_size}, we reduce the recursion depth, which equals $\log(2^{n} / \texttt{block\_size})$.
    We view the matrix $M'$ returned by \textsc{ProcessCirculant} as a matrix over $\mathbb{F}_{2}$ and call heuristic algorithm \textsc{Heu} to implement $M'$.
    Note that \textsc{Heu} is allowed to implement $\pi(M')$ and return the permutation $\pi$.
    The implementation $C$ of $M'$ together with $C_{r}$ and $C_{c}$ given by \textsc{ProcessCirculant} forms what we need in Eq~\eqref{eq:solution}.
    If we allow \texttt{block\_size} to be $2^{n}$, \textsc{ProcessCirculant} will simply return $M$.
    This is equivalent to call \textsc{Heu} directly.
    We call this as fallback strategy.

    \begin{algorithm}
        \caption{Algorithm for synthesizing generalized circulant matrix}
        \label{alg:circ_full}
        \begin{algorithmic}
            \Require $2^{n} \times 2^{n}$ generalized circulant matrix $M$ over $\mathbb{F}_{2}$
            \Ensure CNOT circuit for $M$
            \State $b \gets \min \{ k \geq 0: M \text{ is circulant matrix over } GL(\mathbb{F}_{2}, 2^{k}) \}$
            \State $bestC \gets \perp$
            \For{$block\_size \in \{ 2^{b}, 2^{b+1}, \dots, 2^{n-1}, 2^{n} \}$}
            \Comment{fallback: allow $block\_size = 2^{n}$}
            \State $L \gets \Call{ProcessCirculant}{M, block\_size, 2^{n-1}}$
            \Comment{no-op when $block\_size = 2^{n}$}
            \For{$(M', (C_{r}, C_{c})) \in L$}
            \LComment{\textsc{Heu}: heuristic algorithm for synthesizing CNOT circuits}
            \State $C, \pi \gets \Call{Heu}{M'}$
            \Comment{$C$ implements $\pi(M')$}
            \State $C_{c} \gets \Call{transpose}{C_{c}}$
            \Comment{change $c_{1} \gets c_{1} + c_{2}$ to $CNOT(c_{1},c_{2})$}
            \State $C_{r} \gets \Call{reverse}{C_{r}}$
            \State $C_{r} \gets \pi(C_{r})$
            \State $C \gets \Call{concat}{C_{c}, C, C_{r}}$
            \Comment{$C$ implements $\pi(M)$}
            \If{$C$ is better than $bestC$}
            \State $C \gets bestC$
            \EndIf
            \EndFor
            \EndFor
            \State \Return $bestC$
        \end{algorithmic}
    \end{algorithm}

    Recall that by left-multiplying $E(i + j)$ to $M$, we add the $j$-th row to $i$-th row of $M$.
    By right-multiplying $E(i + j)$ to $M$, we add $i$-th column to $j$-th column of $M$.
    Hence, $(j, i)$ in $C_{r}$ represents $E(i + j)$, whereas $(j, i)$ in $C_{c}$ represents $E(j + i)$.
    We change $(j, i)$ in $C_{c}$ to $(i, j)$ instead.
    After that, $(j', i')$ in both $C_{r}$ and $C_{c}$ means $E(i' + j')$ or $CNOT(j',i')$.

    To get the correct circuit, we reverse the order of operations in $C_{r}$ (as \textsc{ProcessCirculant} collects it in order $\{ E(i_{s} + j_{s}), \dots, E(i_{1} + j_{1}) \}$).
    By Eq~\eqref{eq:type-3-move}, we can absorb $\pi$ into $C_{r}$ to get a circuit implementing $\pi(M)$.

    \begin{example}
        Let
        \begin{equation}
            M = \begin{pNiceArray}{cc|cc}
                1 & 1 & & 1 \\
                & 1 & & \\ \hline
                & 1 & 1 & 1 \\
                & & & 1
            \end{pNiceArray} .
        \end{equation}
        Assume $C_{r}$ is $\{ r_{1} \gets r_{1} + r_{3}, r_{2} \gets r_{2} + r_{4} \}$ and $C_{c}$ is $\{ c_{1} \gets c_{1} + c_{3}, c_{2} \gets c_{2} + c_{4} \}$.
        Then,
        \begin{equation}
            M' = \begin{pmatrix}
                & & 1 & \\
                & & & 1 \\
                1 & & 1 & 1 \\
                & 1 & & 1
            \end{pmatrix} =
            \begin{pmatrix}
                & & 1 & \\
                & & & 1 \\
                1 & & & \\
                & 1 & &
            \end{pmatrix} E(1 + 3) E(1 + 4) E(2 + 4) .
        \end{equation}
        We can check that
        \begin{equation}
            \begin{pmatrix}
                & & 1 & \\
                & & & 1 \\
                1 & & & \\
                & 1 & &
            \end{pmatrix} \underbrace{E(\pi(2) + \pi(4)) E(\pi(1) + \pi(3))}_{\text{row operation}} \cdot E(1 + 3) E(1 + 4) E(2 + 4) \cdot \underbrace{E(3 + 1) E(4 + 2)}_{\text{column operation}} =  M
        \end{equation}
        where $\pi(1) = 3, \pi(2) = 4, \pi(3) = 1, \pi(4) = 2$.
    \end{example}

    \subsection{Numerical Results}

    \cite{ToSC:KLSW17} collected a set of linear layers from block ciphers and constructed MDS matrices, which has been adopted by studies, such as \cite{ToSC:XZLBZ20,zhuOptimizingDepthQuantum2023,AC:ShiFen24,ToSC:YWSZZ24}, to benchmark the performance for synthesizing linear layers.
    The data set is available at \url{https://github.com/rub-hgi/shorter_linear_slps_for_mds_matrices}.
    In this subsection, we will test our algorithm on this data set.
    Note that it contains some matrices that are not circulant, which will be filtered out.
    We will disable fallback strategy to better reflect the effect of our method.

    \subsubsection{Quantum Circuit Depth}

    The DW cost is widely used to estimate the quantum threat to existing symmetric encryption scheme.
    Reducing the depth of quantum circuit for linear layers can give a more precision estimation.
    Besides, as the decoherence time of qubits is limited, quantum circuit with larger depth is likely to be error prone.

    The greedy algorithm of~\cite{AC:ShiFen24} can achieve a good depth for many matrices collected by~\cite{ToSC:KLSW17}.
    We use it as the base for the \textsc{Heu} required by Algorithm~\ref{alg:circ_full}.
    We made some modification to boost the performance.
    Besides $\hat{H}_{prod}$ and $\hat{H}_{sq}$, we also use $\hat{H}_{sum} = H_{sum}$ as cost function, which ourperforms $\hat{H}_{prod}$ or $\hat{H}_{sq}$ in some cases.
    We run Algorithm~\ref{alg:circ_full} multiple times to obtain the results.
    The results are shown in \tablename~\ref{tbl:result_cipher} and \tablename~\ref{tbl:result_mds}.

    \begin{table}[htp]
        \centering
        \caption{Comparison of the depth/gate count of CNOT circuits for matrices used in block ciphers}
        \label{tbl:result_cipher}
        \begin{threeparttable}
            \begin{tabular}{c|c|cc|cc|c}
                \hline
                Cipher & Size & \cite{zhuOptimizingDepthQuantum2023} & \cite{ToSC:YWSZZ24} & \cite{brugiereReducingDepthLinear2021}\tnote{2} & \cite{AC:ShiFen24} & \textbf{Ours}\tnote{1} \\
                \hline
                AES~\cite{daemenDesignRijndaelAdvanced2020}\tnote{3} & 32
                & 28/92 & 13/98 & 12/128 & 10/131\tnote{*} & \textbf{10/107}\tnote{*,4,ii} \\
                Anubis~\cite{barretop.s.l.m.AnubisBlockCipher2000} & 32
                & 20/98 & 14/102 & 14/136 & 10/119\tnote{*} & \textbf{10/117}\tnote{*,i} \\
                Clefia $M_0$~\cite{FSE:SSAMI07} & 32
                & 27/98 & 15/105 & 13/126 & 10/110\tnote{*} & \textbf{10/120}\tnote{*,i} \\
                Clefia $M_1$~\cite{FSE:SSAMI07} & 32
                & 16/103 & 13/106 & 13/127 & 10/128\tnote{*} & \textbf{11/124}\tnote{i} \\
                Joltik~\cite{jeanJoltikV132015} & 16
                & 17/44 & 9/48 & 9/48 & 7/52\tnote{*} & \textbf{8/52}\tnote{i} \\
                SmallScale AES~\cite{FSE:CidMurRob05} & 16
                & 19/43 & 10/46 & 11/59 & 10/62 & \textbf{8/53}\tnote{*,i} \\
                Whirlwind $M_0$~\cite{DCC:BNNRT10} & 32
                & 51/183 & 46/249 & 28/286 & 28/331 & \textbf{17/200}\tnote{*,i} \\
                Whirlwind $M_1$~\cite{DCC:BNNRT10} & 32
                & 54/190 & 50/264 & 25/279 & 22/290 & \textbf{19/241}\tnote{*,i} \\
                \hline
                MIDORI~\cite{AC:BBISHA15} & 16
                & 3/24\tnote{*} & 4/24 & 3/24\tnote{*} & 3/24\tnote{*} & \textbf{3/24}\tnote{*} \\
                PRIDE $L_0$~\cite{C:ADKLPY14} & 16
                & 3/24\tnote{*} & - & 3/24\tnote{*} & 3/24\tnote{*} & \textbf{3/24}\tnote{*} \\
                PRIDE $L_3$~\cite{C:ADKLPY14} & 16
                & 5/24 & - & 3/24\tnote{*} & 3/24\tnote{*} & \textbf{4/24} \\
                PRINCE $M_0$~\cite{AC:BCGKKK12} & 16
                & 6/24 & - & 3/24\tnote{*} & 3/24\tnote{*} & \textbf{4/28} \\
                PRINCE $M_1$~\cite{AC:BCGKKK12} & 16
                & 6/24 & - & 3/24\tnote{*} & 3/24\tnote{*} & \textbf{4/28} \\
                QARMA64~\cite{ToSC:Avanzi17} & 16
                & 5/24 & - & 3/24\tnote{*} & 3/24\tnote{*} & \textbf{3/24}\tnote{*} \\
                QARMA128~\cite{ToSC:Avanzi17} & 32
                & 5/48 & 5/48 & 3/48\tnote{*} & 3/48\tnote{*} & \textbf{7/96} \\
                \hline
            \end{tabular}
            \begin{tablenotes}
                \item[*] Results with (*) indicates that they have the state-of-the-art depth.
                \item[i] Results with (i) indicates that they are derived with recursion depth 1.
                \item[ii] Results with (ii) indicates that they are derived with recursion depth 2.
                \item[1] Use \cite{AC:ShiFen24} as \textsc{Heu} without fallback.
                \item[2] Result is not available in original paper and is provided by \cite{AC:ShiFen24}.
                \item[3] \cite{zhangOptimizedQuantumCircuit2024} gave a result of 10/105, which is better than 10/107 in the table.
                \item[4] This can be improved to 10/105 using Algorithm~\ref{alg:upper_tri} in Appendix. which requires the matrix can be transformed into unit upper-triangular form.
            \end{tablenotes}
        \end{threeparttable}
    \end{table}

    \begin{table}[htp]
        \centering
        \caption{Comparison of the depth/gate count of CNOT circuits for constructed MDS matrices}
        \label{tbl:result_mds}
        \begin{threeparttable}
            \begin{tabular}{c|c|cc|cc|c}
                \hline
                Matrices & Size & \cite{zhuOptimizingDepthQuantum2023} & \cite{ToSC:YWSZZ24} & \cite{brugiereReducingDepthLinear2021} & \cite{AC:ShiFen24} & \textbf{Ours} \\
                \hline
                \multicolumn{7}{c}{$4 \times 4$ matrices over $GL(4, \mathbb{F}_{2})$} \\
                \hline
                BKL16~\cite{C:BeiKraLea16} & 16
                & 21/41 & 10/43\tnote{*} & 12/57 & 10/59\tnote{*} & \textbf{11/70}\tnote{i} \\
                LS16~\cite{FSE:LiuSim16} & 16
                & 26/44 & 12/49 & 12/65 & 11/63 & \textbf{10/68}\tnote{*,i} \\
                LW16~\cite{FSE:LiWan16} & 16
                & 27/44 & 12/48 & 12/65 & 11/62 & \textbf{10/65}\tnote{*,i} \\
                SKOP15~\cite{FSE:SKOP15} & 16
                & 22/44 & 11/47 & 11/59 & 11/59 & \textbf{8/58}\tnote{*,i} \\
                LW16(Involutory)~\cite{FSE:LiWan16} & 16
                & 25/44 & - & 9/48 & 7/52\tnote{*} & \textbf{9/59}\tnote{i} \\
                SKOP15(Involutory)~\cite{FSE:SKOP15} & 16
                & 16/44 & - & 9/48 & 7/52\tnote{*} & \textbf{8/52}\tnote{i} \\
                \hline
                \multicolumn{7}{c}{$4 \times 4$ matrices over $GL(8, \mathbb{F}_{2})$} \\
                \hline
                BKL16~\cite{C:BeiKraLea16} & 32
                & 47/114 & 26/132 & 20/188 & 18/208\tnote{*} & \textbf{18/226}\tnote{*,ii} \\
                LS16~\cite{FSE:LiuSim16} & 32
                & 54/121 & 29/136 & 23/203 & 21/235 & \textbf{17/207}\tnote{*,ii} \\
                LW16~\cite{FSE:LiWan16} & 32
                & 42/104 & 27/129 & 16/167 & 13/164\tnote{*} & \textbf{16/185}\tnote{ii} \\
                SKOP15~\cite{FSE:SKOP15} & 32
                & 20/90 & 14/96 & 11/118 & 10/112 & \textbf{9/116}\tnote{*,i}  \\
                LW16(Involutory)~\cite{FSE:LiWan16} & 32
                & 19/87 & 11/96 & 8/98\tnote{*} & 8/99\tnote{*} & \textbf{8/110}\tnote{*,i} \\
                SKOP15(Involutory)~\cite{FSE:SKOP15} & 32
                & 16/91 & 13/94 & 9/96 & 8/101\tnote{*} & \textbf{8/104}\tnote{*,i} \\
                \hline
                \multicolumn{7}{c}{$8 \times 8$ matrices over $GL(4, \mathbb{F}_{2})$} \\
                \hline
                SKOP15~\cite{FSE:SKOP15} & 32
                & 49/170 & 48/249 & 29/286 & 28/349 & \textbf{20/230}\tnote{*,i} \\
                SKOP15(Involutory)~\cite{FSE:SKOP15} & 32
                & 37/185 & 46/248 & 30/300 & 29/337 & \textbf{23/261}\tnote{*,i} \\
                \hline
            \end{tabular}
        \end{threeparttable}
    \end{table}

    For various matrices, our algorithm achieves reduced depth.
    Notably, compared with previous works, we are able to reduce the depth by 39\% for Whirlwind $M_0$.
    As for AES MixColumn, our result is quite close to state-of-the-art given by Zhang~\etal~\cite{zhangOptimizedQuantumCircuit2024}.
    The comparison with previous works on AES MixColumn is shown in \tablename~\ref{tbl:result_aes_mc}.

    \begin{table}[htp]
        \centering
        \caption{Comparison of CNOT circuits of the AES MixColumns matrix}
        \label{tbl:result_aes_mc}
        \begin{threeparttable}
            \begin{tabular}{ccccc}
                \hline
                Source & Width & Depth & \#CNOT & Manual \\
                \hline
                \cite{zhuOptimizingDepthQuantum2023} & 32 & 28 & 92 & No \\
                \cite{AC:LPZW23} & 32 & 16 & 98 & No \\
                \cite{AC:ShiFen24} & 32 & 10 & 131 & No \\
                \cite{zhangOptimizedQuantumCircuit2024} & 32 & 10 & 105 & Yes \\
                \textbf{Ours}(Alg.~\ref{alg:circ_full}) & \textbf{32} & \textbf{10} & \textbf{107} & \textbf{No} \\
                \textbf{Ours}(Alg.~\ref{alg:upper_tri}) & \textbf{32} & \textbf{10} & \textbf{105} & \textbf{No}\tnote{1} \\
                \hline
            \end{tabular}
            \begin{tablenotes}
                \item[1] Algorithm~\ref{alg:upper_tri} is specifically designed for the subtask in \cite{zhangOptimizedQuantumCircuit2024} that requires manual adjustment.
            \end{tablenotes}
        \end{threeparttable}
    \end{table}

    The great performance for Whirlwind might be related to its high density, i.e., proportion of ones.
    When a matrix is dense and ones are evenly distributed, heuristic methods, such as~\cite{AC:ShiFen24}, are more likely to produce inefficient implementations.
    Our approach allows heuristic method to find more efficient implementations by reducing the number of ones.
    For Whirlwind and SKOP15 8x8 over $GL(4, \mathbb{F}_{2})$, the density is above 45\% and the depth reduction is relatively significant.
    For the rest matrices, the density is below 35\% and the improvement is milder or even negative.

    \subsubsection{Classical Circuit Size}

    The circuit size is a concern to lightweight cryptography, as smaller size means smaller area on chip.
    This is important when the device is resource constrained.
    Note that as different variants of algorithms are used, the data in this subsection (\tablename~\ref{table:result_size_cipher}) is irrelevant of that in previous subsection, like \tablename~\ref{tbl:result_cipher}.

    The algorithm~\cite{ToSC:XZLBZ20} is good at reducing the gate count.
    Its improvement~\cite{ToSC:YWSZZ24} has better performance.
    However, it is not open source and the improvement over~\cite{ToSC:XZLBZ20} is not significant, which should not affect whether our algorithm is effective.
    Hence, we use \cite{ToSC:XZLBZ20} as the \textsc{Heu} required by Algorithm~\ref{alg:circ_full}.
    The results are shown in \tablename~\ref{table:result_size_cipher}.

    \begin{table}[htp]
        \centering
        \caption{Comparison of the gate count of XOR circuits for matrices used in block ciphers}
        \label{table:result_size_cipher}
        \begin{threeparttable}
            \begin{tabular}{c|c|c|cc|c}
                \hline
                \multirow{2}{*}{Cipher} & \multirow{2}{*}{Size} & g-XOR & \multicolumn{3}{c}{s-XOR} \\
                \cline{3-6}
                & & \cite{linFrameworkOptimizeImplementations2021} & \cite{ToSC:XZLBZ20} & \cite{ToSC:YWSZZ24} & \textbf{Ours}\tnote{1} \\
                \hline
                AES~\cite{daemenDesignRijndaelAdvanced2020} & 32
                & 91\tnote{*} & 92 & 91\tnote{*} & \textbf{105} \\
                Anubis~\cite{barretop.s.l.m.AnubisBlockCipher2000} & 32
                & 96\tnote{*} & 99 & - & \textbf{116} \\
                Clefia $M_0$~\cite{FSE:SSAMI07} & 32
                & 96\tnote{*} & 98 & 97 & \textbf{115} \\
                Clefia $M_1$~\cite{FSE:SSAMI07} & 32
                & 108 & 103\tnote{*} & 103\tnote{*} & \textbf{117} \\
                Joltik~\cite{jeanJoltikV132015} & 16
                & 43\tnote{*} & 44 & 44 & \textbf{50} \\
                SmallScale AES~\cite{FSE:CidMurRob05} & 16
                & 43 & 43 & 42\tnote{*} & \textbf{50} \\
                Whirlwind $M_0$~\cite{DCC:BNNRT10} & 32
                & 183 & 183 & 173 & \textbf{159}\tnote{*} \\
                Whirlwind $M_1$~\cite{DCC:BNNRT10} & 32
                & 180 & 190 & 181 & \textbf{169}\tnote{*} \\
                \hline
            \end{tabular}
            \begin{tablenotes}
                \item[*] Results with (*) indicates that they have the state-of-the-art XOR count.
                \item[1] Use \cite{ToSC:XZLBZ20} as \textsc{Heu} without fallback.
            \end{tablenotes}
        \end{threeparttable}
    \end{table}

    It turns out that generally our method has a large overhead in terms of size.
    However, for some matrices, like Whirlwind $M_{0}$, we beat the previous state-of-the-art result.

    \section{Conclusion}

    In this paper, we present a new framework to optimize the linear layers used in symmetric cryptography.
    Our method makes use of the circulant structure of matrices, which is commonly adopted in the design of MDS matrices.
    To the best of our knowledge, this is the first time that such property is used to optimize the result.
    We achieve superior results in many matrices.
    For Whirlwind $M_{0}$, we reduced the depth of quantum circuit from 28 to 17.
    For AES MixColumn, our automated method achieves the state-of-the-art depth and only uses 2 extra gates than manually optimized result by Zhang~\etal~\cite{zhangOptimizedQuantumCircuit2024}.
    In terms of XOR count of classical circuit, we could also beat the previous state-of-the-art results for some specific matrices using this framework.
    The results show that the new approach is effective and has the potential to improve both quantum and classical circuits.

    In the future, it would be interesting to explore whether other properties used in the design of linear layers can be used to improve their circuit implementation.

    \bibliography{ref}

    \clearpage
    \appendix
    \section{Algorithm for Unit Upper-Triangular matrix}

    \cite{zhangOptimizedQuantumCircuit2024} observed that AES MixColumn can be transformed into upper triangular form through restricted elementary operations.
    \begin{equation}
        \begin{pmatrix}
            01 & 02 & 00 & 01 \\
            & 01 & 02 & 00 \\
            &    & 01 & 02 \\
            &    &    & 01
        \end{pmatrix}_{\mathbb{F}_{2^{8}}}
    \end{equation}
    where
    \begin{gather}
        (01)_{\mathbb{F}_{2^{8}}} = \begin{pmatrix}
            1 &   &   &   &   &   &   &   \\
            & 1 &   &   &   &   &   &   \\
            &   & 1 &   &   &   &   &   \\
            &   &   & 1 &   &   &   &   \\
            &   &   &   & 1 &   &   &   \\
            &   &   &   &   & 1 &   &   \\
            &   &   &   &   &   & 1 &   \\
            &   &   &   &   &   &   & 1
        \end{pmatrix}_{\mathbb{F}_{2}} , \\
        (02)_{\mathbb{F}_{2^{8}}} = \begin{pmatrix}
            &   &   &   &   &   &   & 1 \\
            1 &   &   &   &   &   &   & 1 \\
            & 1 &   &   &   &   &   &   \\
            &   & 1 &   &   &   &   & 1 \\
            &   &   & 1 &   &   &   & 1 \\
            &   &   &   & 1 &   &   &   \\
            &   &   &   &   & 1 &   &   \\
            &   &   &   &   &   & 1 &
        \end{pmatrix}_{\mathbb{F}_{2}} .
    \end{gather}

    We notice that it is actually a unit upper-triangular sparse matrix over $\mathbb{F}_{2}$.
    This structure enables efficient synthesis using the following algorithm.

    \begin{algorithm}
        \caption{Algorithm to synthesize unit upper-triangular matrix}
        \label{alg:upper_tri}
        \begin{algorithmic}
            \Require A $n \times n$ unit upper-triangular matrix $M$ over $\mathbb{F}_{2}$
            \Ensure CNOT circuit for $M$
            \State $C \gets \varnothing$
            \While{$M \neq I$}
            \State $S \gets \{ i: \forall k \neq i, M[i][k] = 0 \}$
            \State $T \gets \{ j: \exists i \in S, i \neq j, M[j][i] = 1 \}$
            \State $L \gets \varnothing$
            \For{$j \in T$}
            \For{$i \in S$}
            \If{$M[j][i] = 1$ and $i \notin L$ and $j \notin L$}
            \State Add $i$-th row to $j$-th row of $M$
            \State Add $CNOT(i, j)$ to $C$
            \State $L \gets L \cup \{ i, j \}$
            \EndIf
            \EndFor
            \EndFor
            \EndWhile
            \State \Return $C$
        \end{algorithmic}
    \end{algorithm}

    \clearpage
    \section{The Depth 17 Implementation of Whirlwind $M_0$}

    \begin{table}[H]
        \centering
        \caption{The implementation of Whirlwind $M_0$ with quantum depth 17. $(i, j)$ stands for the CNOT gate adding the $i$-th qubit to the $j$-th qubit. The outputs $\ket{y_{0}}, \ket{y_{1}}, \dots, \ket{y_{31}}$ are represented by 28, 27, 25, 18, 24, 31, 29, 22, 20, 19, 17, 26, 16, 23, 21, 30, 1, 2, 11, 4, 5, 6, 15, 0, 9, 10, 3, 12, 13, 14, 7, 8 respectively.}
        \begin{tabular}{|c|*{8}{c}|}
            \hline
            \multirow{2}{*}{Depth 1} & (0,16) & (1,17) & (2,18) & (3,19) & (4,20) & (5,21) & (6,22) & (7,23) \\
            & (8,24) & (9,25) & (10,26) & (11,27) & (12,28) & (13,29) & (14,30) & (15,31) \\
            \hline
            \multirow{2}{*}{Depth 2} & (10,4) & (6,8) & (14,0) & (2,12) & (22,24) & (30,16) & (18,28) & (26,20) \\
            & (3,13) & (7,9) & (11,5) & (15,1) & (31,17) & (19,29) & (23,25) & (27,21) \\
            \hline
            \multirow{2}{*}{Depth 3} & (25,22) & (29,18) & (21,26) & (17,30) & (1,14) & (13,2) & (5,10) & (9,6) \\
            & (8,11) & (12,15) & (4,7) & (0,3) & (24,27) & (16,19) & (28,31) & (20,23) \\
            \hline
            \multirow{2}{*}{Depth 4} & (28,25) & (24,29) & (16,21) & (20,17) & (3,11) & (7,15) & (30,1) & (22,9) \\
            & (26,5) & (18,13) & (23,10) & (31,2) & (27,6) & (19,14) & & \\
            \hline
            \multirow{2}{*}{Depth 5} & (23,16) & (19,20) & (31,24) & (27,28) & (4,12) & (0,8) & (17,18) & (21,22) \\
            & (15,6) & (11,2) & (1,3) & (25,26) & (13,7) & (29,30) & (9,10) & \\
            \hline
            \multirow{2}{*}{Depth 6} & (11,4) & (22,12) & (18,8) & (15,0) & (20,27) & (1,2) & (16,31) & (24,23) \\
            & (28,19) & (9,3) & (5,6) & (13,14) & (30,10) & (26,7) & & \\
            \hline
            \multirow{2}{*}{Depth 7} & (26,8) & (15,14) & (29,4) & (25,0) & (17,20) & (21,16) & (11,10) & (30,12) \\
            & (31,3) & (5,7) & & & & & & \\
            \hline
            Depth 8 & (21,12) & (17,8) & (4,3) & (20,7) & & & & \\
            \hline
            Depth 9 & (8,15) & (12,11) & & & & & & \\
            \hline
            Depth 10 & (29,12) & (25,8) & (30,3) & (0,7) & & & & \\
            \hline
            Depth 11 & (12,9) & (8,13) & (24,11) & (25,14) & (28,15) & (29,10) & (16,3) & (27,7) \\
            \hline
            \multirow{2}{*}{Depth 12} & (8,5) & (12,1) & (29,24) & (25,28) & (18,0) & (22,4) & (23,3) & (19,7) \\
            & (20,15) & (21,2) & (17,6) & (16,11) & (30,9) & & & \\
            \hline
            \multirow{2}{*}{Depth 13} & (7,15) & (3,11) & (16,29) & (20,25) & (28,17) & (24,21) & (4,12) & (0,8) \\
            & (26,13) & (30,10) & (18,5) & (22,1) & & & & \\
            \hline
            \multirow{2}{*}{Depth 14} & (22,8) & (30,0) & (18,12) & (26,4) & (7,13) & (3,9) & (15,5) & (11,1) \\
            & (31,24) & (27,28) & (19,20) & (23,16) & & & & \\
            \hline
            \multirow{2}{*}{Depth 15} & (14,8) & (6,0) & (10,12) & (2,4) & (24,13) & (28,9) & (20,1) & (16,5) \\
            & (22,31) & (26,19) & (30,23) & (18,27) & & & & \\
            \hline
            \multirow{2}{*}{Depth 16} & (17,11) & (21,15) & (25,3) & (29,7) & (19,2) & (31,14) & (23,6) & (27,10) \\
            & (18,20) & (22,16) & (26,28) & (30,24) & (8,5) & (12,1) & (4,9) & (0,13) \\
            \hline
            \multirow{2}{*}{Depth 17} & (1,28) & (2,27) & (11,25) & (4,18) & (5,24) & (6,31) & (15,29) & (0,22) \\
            & (9,20) & (10,19) & (3,17) & (12,26) & (13,16) & (14,23) & (7,21) & (8,30) \\
            \hline
        \end{tabular}
    \end{table}

    \clearpage
    \section{The 159 XOR Implementation of Whirlwind $M_0$}

    \begin{table}[H]
        \centering
        \caption{The implementation of Whirlwind $M_0$ with 159 XORs. $(i, j)$ stands for the XOR gate adding the $i$-th bit to the $j$-th bit. The outputs $y_{0}, y_{1}, \dots, y_{31}$ are represented by 20, 19, 27, 26, 25, 29, 31, 30, 21, 24, 16, 18, 17, 28, 23, 22, 13, 2, 0, 3, 9, 15, 6, 14, 7, 8, 11, 10, 1, 12, 4, 5 respectively.}
        \begin{tabular}{|*{10}{c|}}
            \hline
            No. & Op. & No. & Op. & No. & Op. & No. & Op. & No. & Op. \\
            \hline
            1 & (0,16) & 2 & (1,17) & 3 & (2,18) & 4 & (3,19) & 5 & (4,20) \\
            \hline
            6 & (5,21) & 7 & (6,22) & 8 & (7,23) & 9 & (8,24) & 10 & (9,25) \\
            \hline
            11 & (10,26) & 12 & (11,27) & 13 & (12,28) & 14 & (13,29) & 15 & (14,30) \\
            \hline
            16 & (15,31) & 17 & (29,20) & 18 & (31,20) & 19 & (20,8) & 20 & (5,7) \\
            \hline
            21 & (25,16) & 22 & (7,6) & 23 & (25,14) & 24 & (27,16) & 25 & (26,27) \\
            \hline
            26 & (8,1) & 27 & (21,11) & 28 & (19,24) & 29 & (17,26) & 30 & (9,0) \\
            \hline
            31 & (18,15) & 32 & (13,4) & 33 & (22,9) & 34 & (6,14) & 35 & (4,6) \\
            \hline
            36 & (30,31) & 37 & (21,30) & 38 & (1,9) & 39 & (6,5) & 40 & (18,19) \\
            \hline
            41 & (15,4) & 42 & (14,15) & 43 & (23,28) & 44 & (22,23) & 45 & (3,1) \\
            \hline
            46 & (14,5) & 47 & (11,3) & 48 & (0,11) & 49 & (23,30) & 50 & (1,2) \\
            \hline
            51 & (27,5) & 52 & (10,0) & 53 & (5,13) & 54 & (3,10) & 55 & (27,18) \\
            \hline
            56 & (16,12) & 57 & (23,1) & 58 & (7,5) & 59 & (18,3) & 60 & (13,15) \\
            \hline
            61 & (16,6) & 62 & (28,21) & 63 & (10,8) & 64 & (26,14) & 65 & (24,6) \\
            \hline
            66 & (17,5) & 67 & (29,22) & 68 & (31,22) & 69 & (24,17) & 70 & (12,13) \\
            \hline
            71 & (2,3) & 72 & (9,2) & 73 & (0,9) & 74 & (30,8) & 75 & (20,26) \\
            \hline
            76 & (16,30) & 77 & (8,3) & 78 & (26,12) & 79 & (4,0) & 80 & (19,26) \\
            \hline
            81 & (30,0) & 82 & (28,0) & 83 & (12,7) & 84 & (31,28) & 85 & (29,10) \\
            \hline
            86 & (23,29) & 87 & (16,19) & 88 & (19,23) & 89 & (25,19) & 90 & (21,25) \\
            \hline
            91 & (30,24) & 92 & (24,31) & 93 & (27,24) & 94 & (20,27) & 95 & (2,10) \\
            \hline
            96 & (17,4) & 97 & (16,27) & 98 & (3,12) & 99 & (27,31) & 100 & (25,18) \\
            \hline
            101 & (4,3) & 102 & (28,11) & 103 & (25,4) & 104 & (11,2) & 105 & (29,3) \\
            \hline
            106 & (20,29) & 107 & (22,25) & 108 & (17,22) & 109 & (5,12) & 110 & (1,5) \\
            \hline
            111 & (30,1) & 112 & (11,14) & 113 & (15,5) & 114 & (26,10) & 115 & (7,10) \\
            \hline
            116 & (19,0) & 117 & (22,19) & 118 & (25,23) & 119 & (18,29) & 120 & (18,13) \\
            \hline
            121 & (22,11) & 122 & (19,25) & 123 & (26,28) & 124 & (29,25) & 125 & (8,6) \\
            \hline
            126 & (15,9) & 127 & (2,13) & 128 & (28,27) & 129 & (23,8) & 130 & (23,20) \\
            \hline
            131 & (14,8) & 132 & (31,2) & 133 & (31,21) & 134 & (5,2) & 135 & (27,17) \\
            \hline
            136 & (9,11) & 137 & (17,15) & 138 & (12,0) & 139 & (0,1) & 140 & (13,4) \\
            \hline
            141 & (10,15) & 142 & (25,16) & 143 & (6,7) & 144 & (13,20) & 145 & (2,19) \\
            \hline
            146 & (0,27) & 147 & (3,26) & 148 & (9,25) & 149 & (15,29) & 150 & (6,31) \\
            \hline
            151 & (14,30) & 152 & (7,21) & 153 & (8,24) & 154 & (11,16) & 155 & (10,18) \\
            \hline
            156 & (1,17) & 157 & (12,28) & 158 & (4,23) & 159 & (5,22) & & \\
            \hline
        \end{tabular}
    \end{table}

\end{document}